\documentclass[letterpaper, 10 pt, conference]{ieeeconf}  

\usepackage{color,array}

\usepackage{amssymb,amsfonts}

\usepackage{graphicx}
\usepackage{verbatim}   

\usepackage{algorithm}
\usepackage{algpseudocode}

\usepackage{mathtools}
\usepackage{siunitx}
\usepackage{tabularx}
\usepackage{pgfplots}
\usepackage{pgfplotstable}

\usepackage{multirow}
\usepgfplotslibrary{fillbetween}
\usepackage{booktabs}
\pgfplotsset{compat = newest}
\usetikzlibrary{arrows,positioning,shapes,intersections,external,patterns,calc,fit,decorations,decorations.markings} 
\usepackage{cite}
\newcommand{\R}{\mathbb{R}}

\newcommand{\J}{\mathbb{J}}

\newcommand{\SO}{\text{SO}}

\newcommand{\eff}{\text{eff}}

\newtheorem{definition}{Definition}
\newtheorem{lemma}{Lemma}

\newtheorem{proposition}{Proposition}[section]
\newtheorem{remarkth}[definition]{Remark}
\newenvironment{remark}{\begin{remarkth}\upshape}{\end{remarkth}}

\IEEEoverridecommandlockouts                              

\title{\LARGE \bf
Observer-Based Estimation and Hydrostatic Inertia Modeling for Cooperative Transport of Variable-Inertia Loads with Quadrotors
}

\author{Jacob R. Goodman, Leonardo J. Colombo, Juan I. Giribet
\thanks{J. Goodman is with Department of Mathematical Sciences NTNU, Norwegian University of Science and Technology, Norway.{\tt\small jacob.goodman@ntnu.no.}}
\thanks{Juan I. Giribet is with Universidad de San Andr\'es (UdeSA) and CONICET, Argentina.
        {\tt\footnotesize jgiribet@conicet.gov.ar}}     
        \thanks{L. Colombo is with Centre for Automation and Robotics (CSIC-UPM), Ctra. M300 Campo Real, Km 0,200, Arganda
del Rey - 28500 Madrid, Spain.{\tt\small leonardo.colombo@csic.es}}%
\thanks{J. Goodman was supported by the Marie Skłodowska-Curie grant agreement No. 101206748 (GNACS). J. Giribet was supported by PICT-2019-2371 and PICT-2019-0373 projects from Agencia Nacional de Investigaciones Cient\'ificas y Tecnol\'ogicas, and UBACyT-0421BA project from the Universidad de Buenos Aires (UBA), Argentina. The research leading to these results was supported in part by iRoboCity2030-CM, Robótica Inteligente para Ciudades Sostenibles (TEC-2024/TEC-62), funded by the Programas de Actividades I+D en Tecnologías en la Comunidad de Madrid. The authors also acknowledge financial support from Grant PID2022-137909NB-C21 funded by MCIN/AEI/ 10.13039/501100011033}
}
\begin{document}

\maketitle
\thispagestyle{empty}
\pagestyle{empty}


\begin{abstract}
We address load-parameter estimation in cooperative aerial transport with time-varying mass and inertia, as in fluid-carrying payloads. Using an intrinsic manifold model of the multi–quadrotor–load dynamics, we combine a geometric tracking controller with an observer for parameter identification. We estimate mass from measurable kinematics and commanded forces, and handle variable inertia via an inertia surrogate that reproduces the load’s rotational dynamics for control and state propagation.

Instead of real-time identification of the true inertia tensor, driven by high-dimensional internal fluid motion, we leverage known tank geometry and fluid-mechanical structure to precompute inertia tensors and update them through a lookup table indexed by fill level and attitude. The surrogate is justified via the incompressible Navier–Stokes equations in the translating/rotating load frame: when effective forcing is gravity-dominated (i.e., translational/rotational accelerations and especially jerk are limited), the fluid approaches hydrostatic equilibrium and the free surface is well approximated by a plane orthogonal to the body-frame gravity direction. 
\end{abstract}

\section{Introduction}

Geometric control is a powerful framework for aerial robotics, handling the coupled translational and rotational dynamics of underactuated multirotors and payloads \cite{Elastic,Lee-TCST,LeeSrePICDC13}. By formulating dynamics and feedback directly on $\mathrm{SE}(3)$, it yields (almost) globally valid, structure-preserving laws and avoids the singularities of local attitude coordinates. Cooperative transportation with multiple quadrotors has therefore attracted growing interest for logistics, construction, and search-and-rescue \cite{Fink,loiano}, with recent experiments demonstrating agile manipulation of cable-suspended loads in cluttered environments \cite{sun2025agilecoop}.

In aerial transport, cables provide a lightweight coupling between UAVs and cargo \cite{pacheco1,pacheco2}. Geometric controllers have been developed for teams transporting suspended point-mass loads \cite{SreLeePICDC13}, with extensions to rigid-body payloads \cite{Lee-TCST,wu}. Although many works assume inelastic tethers, elasticity can be crucial for accurate force transmission and reliable state/parameter estimation; elastic-tether models and controllers were studied for single-UAV transport \cite{Elastic,wu} and cooperative rigid-load transport \cite{jacob1,jacob2}. Related estimation works include inertial identification in cooperative transport \cite{petitti2020inertial} and the impact of internal forces in cable co-manipulation \cite{tognon2018internalforce}.

A key limitation is that most available models/controllers assume constant load mass and inertia. This can fail for fluid-filled, moving, or deformable payloads, where the mass distribution evolves and inertial parameters vary in time, degrading closed-loop performance. Motivated by such scenarios, we consider a team of quadrotor UAVs transporting a rigid load with time-varying mass and inertia, suspended through elastic cables. To the best of our knowledge, this is the first work combining geometric modeling with controller and observer design in the presence of \emph{time-varying load inertia} for cooperative aerial transportation with elastic tethers.

The work in \cite{jacob1} derived a complete geometric variational model for a multi-quadrotor--load system with fixed load mass and inertia. Here we address the estimation challenge posed by \emph{unknown, time-varying} inertial parameters by developing online adaptation mechanisms while preserving the geometric structure of the closed-loop dynamics.

Main contributions: (i) we derive structure-preserving reduced equations for a cooperative multi-quadrotor--load system with time-varying load mass $m_L(t)$ and inertia tensor $\J_L(t)$, where the quadrotors are connected to the load through elastic tethers; (ii) leveraging the reduced dynamics and available measurements, we propose an observer to estimate the time-varying load mass and characterize its error dynamics under bounded disturbances and model mismatch; (iii) to cope with inertia variations induced by internal fluid motion, we introduce a computationally efficient hydrostatic surrogate in which $\J_L(t)$ is updated via a look-up table precomputed from known tank geometry and a hydrostatic approximation motivated by the body-frame Navier--Stokes equations, with the look-up indexed by the estimated mass and the measured load attitude.

The remainder of the paper is organized as follows. Section~II presents the geometric model and derives reduced equations with time-varying load mass and inertia. Section~III develops the estimation framework, including the online mass observer and hydrostatic inertia update via offline precomputation. Section~IV reports simulation results, and Section~V concludes and outlines future work.

\section{Modelization and Control Design}\label{sec3}
We begin by modelling and deriving the control system describing the transportation task between the quadrotors. This is done by constructing the total kinetic and potential energies of the system, in addition to the virtual work done by non-conservative forces and, subsequently, by using tools from variational calculus on manifolds. 

Consider $N$ identical quadrotor UAVs transporting a rigid body of variable total mass $m_{L}: [0, T] \to \in\mathbb{R}_{>0}$ and variable positive-definite inertia matrix $\mathbb{J}_{L}(t): [0, T] \to \mathbb{R}^{3\times 3}$. It is connected to the center of mass of each quadrotor via a massless inflexible cable of length $L$. 


The configuration space of the mechanical system that describes the cooperative task is given by $$Q = \underbrace{(\SO(3) \times \R^{3})}_{\text{Rigid load}} \times\underbrace{(S^{2})^{N}}_{\text{Cables}}\times\underbrace{(\SO(3))^{N}}_{\text{Quadrotor attitudes}}.$$ The system has $5N+1$ degrees of freedom: $6$ degrees corresponding to the load, $2$ degrees for each cable, and $3$ degrees for each quadrotor. Note that the positions of the quadrotors do not appear in the configuration space, as they are uniquely defined in terms of the other state variables via the physical relation $x_{Q_{j}} = x_{L} + R_{L} r_{j} - L q_{j}$ for $j \in \{1, \dots, N\}$. Meanwhile, we have $4N$ inputs to the system in the form of $N$ thrust controls $f_{j} \in \R$, corresponding to the total lift force exerted on the quadrotors by the spinning propellers, and $N$ moment controls $M_{j} \in \R^{3}$, which are related to the torques induced on the quadrotors by the rotating propellers, for each $j \in \{1, \dots, N\}$. Hence, the complete system has $2N + 6$ degrees of under-actuation, with the quadrotor positions and attitudes being directly actuated. Note also that upon fixing a body frame to each quadrotor such that the vector $\bar{e}_{3} = [0, 0, 1]^{T}$ points in the direction of the applied thrust, we may alternatively express the thrust controls $f_{j}$ via the vectors $u_{j}=f_{j}R_{j}e_{3}\in\mathbb{R}^{3}$ in the inertial frame for $j \in \{1,\dots, N\}$. We will utilize this representation frequently throughout the section. Alternatively, one may choose to control the total thrust of \textit{each} propeller individually. We adopt the former approach since it is standard in the literature and cleanly separates the quadrotor attitude dynamics from the remaining dynamics. The thrust of the $i^{\text{th}}$ propeller along the $e_{3}$-axis follows from the total thrust and moment controller \cite{lee_geometric_2010}.

This thrust/torque representation allows us to separate the quadrotor attitude dynamics from the
translational interaction forces and to write the closed-form reduced equations that will later be
used for estimation. Substituting the constraint $x_{Q_j}=x_L+R_L r_j-Lq_j$ and applying the
Lagrange--d'Alembert principle on $Q$, we obtain the following reduced controlled
Euler--Lagrange equations. When $m_L$ and $\mathbb{J}_L$ are constant, they coincide (up to notation)
with the standard cooperative transport models in \cite{jacob1}; here the time
variation of $m_L(t)$ and $\mathbb{J}_L(t)$ yields the additional terms $\dot m_L v_L$ and
$\dot{\mathbb{J}}_L\Omega_L$:

\begin{equation}
\dot{x}_L = v_L,\quad\dot{R}_L = R_L \hat{\Omega}_L,\quad \dot{q}_j = \omega_j \times q_j, \label{xred}
\end{equation}

\begin{align} 
&m_{\eff} ( \dot{v}_L + ge_3) + \dot{m}_L v_L = \nonumber\\
&\hspace{1cm}\sum_{j=1}^N \Big(u_j - m_Q R_L(\hat{\Omega}_L^2 + \dot{\hat{\Omega}}_L)r_j + m_Q L\ddot{q}_j \Big), \label{vred} 
\end{align}
\begin{align}
&\mathbb{J}_{\eff} \dot{\Omega}_L + \hat{\Omega}_L \mathbb{J}_{\eff} \Omega_L + \dot{\mathbb{J}}_L \Omega_L =\nonumber\\ &\hspace{1cm}\sum_{j=1}^N m_Q \hat{r}_j R_L^T \Big(-g e_3 - \dot{v}_L + L\ddot{q}_j + \frac1{m_Q}u_j\Big),\label{OmegaLred}\\ 
&\dot{\omega}_j = \frac1{L} \hat{q}_j\Big( \dot{v}_L + R_L(\hat{\Omega}_L^2 + \dot{\hat{\Omega}}_L)r_j + ge_3 - \frac1{m_Q} u_j \Big), \label{wjred}\\
&\mathbb{J}_Q \dot{\Omega}_j = \mathbb{J}_Q \Omega_j \times \Omega_j + M_j,\,\, \dot{R}_j = R_j \hat{\Omega}_j,\label{Rdynred}.
\end{align} where $j = 1,...,N$, $m_{\eff} := Nm_Q + m_L(t)$ and $\mathbb{J}_{\eff} := \mathbb{J}_L(t) - \sum_{j=1}^N m_Q \hat{r}_j^2$.

\begin{remark}
Equations \eqref{xred} describe the kinematics of the load position, load attitude, and the cable attitudes, respectively.  Equations \eqref{vred}-\eqref{OmegaLred} describe the dynamics of the load's position and attitude, respectively. Equation \eqref{wjred} describes the dynamics of each cable. Finally, equation \eqref{Rdynred} corresponds with the dynamics of the attitudes of each quadrotor.
\end{remark}
\begin{remark}
Note that the configuration space provides $5N + 1$ degrees of freedom and the controls provide $4N$ inputs to the system. Hence, the system has $2N + 6$ degrees of underactuation, and can be shown not to be differentially flat. That is, $f_1, \dots, f_N$ and $M_1, \dots, M_N$ cannot be designed to control the complete system. By avoiding certain degrees of freedom, one can make the system differentiable flat, as in the cooperative task with inelastic cables studied in \cite{wu}. 
\end{remark}

In what follows, we apply some feedback terms to the reduced model \eqref{xred}-\eqref{Rdynred} through the controls, allowing us to reach a simplified set of equations. We decompose $u_j$ into components which are parallel and perpendicular to the cable attitudes $q_j$ via $u_j = u_j^{\parallel} + u_j^{\perp}$, where $u_j^{\parallel} = (q_j^T u_j)q_j$ and $u_j^{\perp} = (I - q_j q_j^T)u_j$. This is motivated by the fact that only the $u_j^{\parallel}$ components influence the dynamics of the load, while only the $u_j^{\perp}$ components influence the cables dynamics.


Our first step is to find an equation for $\ddot{q}_j$ that we will substitute into equations \eqref{vred} and \eqref{OmegaLred}. By differentiating $\dot{q}_j = \omega_j \times q_j$ and expanding it with the vector triple product identity, it can be shown that $\ddot{q}_j = \dot{\omega}_j \times q_j - \|\omega_j \|^2 q_j$. Now we may substitute \eqref{wjred} in for $\dot{\omega}_j$ to find that

\begin{align*}
L \ddot{q}_j &= -q_j \times (L\dot{\omega}_j) - L\|\omega_j\|^2 q_j \\
&= (I - q_j q_j^T)\Big[\dot{v}_L - R_L(\hat{\Omega}_L^2 + \dot{\hat{\Omega}}_L)r_j + ge_3 - \frac1{m_Q} u_j \Big] \\
&\quad -L\left\|\omega_j\right\|^2 q_j.
\end{align*} 
Substituting this equation for $m_Q L \ddot{q}_j$ into \eqref{vred} and making use of the fact that $m_{\eff} = Nm_Q + m_L(t)$, we obtain
\begin{align}
M_L (\dot{v}_L + ge_3) + \dot{m}_L v_L = &\sum_{j=1}^N \left[m_Qq_j q_j^T R_L (\hat{\Omega}_L^2 + \dot{\hat{\Omega}}_L)r_j\right.\nonumber\\ &+\left. u_j^{\parallel} - m_Q L \left\|\omega_j\right\|^2 q_j \right],\label{xLdyn_Lee}
\end{align}where $M_L = m_L(t) I + \sum_{j=1}^N m_Q q_j q_j^T$. 

Repeating this procedure with \eqref{OmegaLred}, and making use of the fact that $\mathbb{J}_{\eff} := \mathbb{J}_L(t) - \sum_{j=1}^N m_Q \hat{r}_j^2$ yields
\begin{equation}\label{OmegaLdyn_Lee}
\mathbb{J}_L \dot{\Omega}_L
+ \hat{\Omega}_L \mathbb{J}_L \Omega_L
+ \dot{\mathbb{J}}_L \Omega_L
= \sum_{j=1}^N m_Q \hat{r}_j R_L^T \,\Xi_j ,
\end{equation}
\begin{equation}
\begin{aligned}
\Xi_j :=\;&
q_j q_j^T R_L(\hat{r}_j \dot{\Omega}_L + \hat{\Omega}_L^2 r_j)
- q_j q_j^T(\dot{v}_L + g e_3) \\
&- L\|\omega_j\|^2 q_j
+ \frac{1}{m_Q}u_j^{\parallel}.
\end{aligned}
\end{equation}

We will now define our controls as \begin{align}
    u_j =& \mu_j + \nu_j + m_Q (\dot{v}_L + R_L(\hat{\Omega}^2_L + \dot{\hat{\Omega}}_L )r_j + ge_3)\label{eq: control}\\ &+ m_Q L \|\omega_j\|^2 q_j,\nonumber
\end{align} where $\mu_j, \nu_j \in \R^3$ are additional controls constrained to be parallel and perpendicular to the cable attitude $q_j$, respectively, for $j = 1, \dots, N.$ Observe that the load acceleration $\dot{v}_L$ and angular acceleration $\dot{\hat{\Omega}}$ appear in the control design \eqref{eq: control}. To avoid circular algebra, we note that these quantities will be replaced with measurements (on-board accelerometers/gyroscopes) or estimations in terms of measured quantities in practice.

With these feedback terms, it is easy to see that the system dynamics take the form:
\begin{align}
    m_L(\dot{v}_L + ge_3) + \dot{m}_L v_L &= \sum_{j=1}^4 \mu_j, \label{xdynred2}\\
    \dot{R}_L &= R_L \hat{\Omega}_L, \quad \dot{x}_L = v_L,  \label{RLkinred2}\\
    \mathbb{J}_L \dot{\Omega}_L + \hat{\Omega}_L \mathbb{J}_L \Omega_L + \dot{\mathbb{J}}_L \Omega_L &= \sum_{j=1}^4 \hat{r}_j R_L^T \mu_j, \label{RLdynred2}\\
  \dot{q}_j = \omega_j \times q_j,\quad   m_Q \dot{\omega}_j &= -q_j \times \nu_j
    \label{qdynred2}\\
  \dot{R}_j = R_j, \quad  \mathbb{J}_Q \dot{\Omega}_j &= \mathbb{J}_Q \Omega_j \times \Omega_j + M_j. \label{Rjdynred2}
\end{align}

\section{Estimation of Load Mass and Moment of Inertia}
For fluid-filled payloads (e.g., leaking or dischargeable tanks), both the mass $m_L(t)$ and inertia tensor $\J_L(t)$ can vary in time. While mass can often be measured with added hardware (e.g., a load cell), there is typically no direct, reliable way to measure the full inertia tensor online, since fluid reconfiguration depends on motion and is not directly sensed without much more complex instrumentation and modeling. In many deployments, online mass sensing is unavailable or biased by leaks, discharge uncertainty, or calibration errors. We therefore target variable-mass operation \emph{without} hardware modifications, developing an online mass estimator using standard state estimates and control inputs, complemented by an inertia update strategy based on an offline tank-geometry/fill-level model.

Throughout, we assume that the load state is available through onboard state estimation, in particular that we have access to estimates of
$x_L$, $v_L$, $\dot v_L$, $R_L$, and $\Omega_L$, as well as the control inputs $\mu_j$ for $j=1,\dots 4$.


\subsection{Mass Estimation}
We now seek to estimate the load mass $m_L$ under the assumption that we are given some parametric model of the mass $m_\theta$, for parameters $\theta \in \R^k$. In practice, there are several plausible mass models which cover an array of situations, such as:
\begin{enumerate}
    \item (\textit{Constant mass}) $\theta = m_0$, with $m_\theta(t) = m_0$ for all $t \in \R$. 
    \item (\textit{Orifice leak}) $\theta = (m_0, \lambda)$, with $m_\theta(t) = (\sqrt{m_0} - \sqrt\lambda t)^2$.
    \item (\textit{Viscous leak}) $\theta = (m_0, \lambda)$, with $m_\theta(t) = m_0 e^{-\lambda t}$.
\end{enumerate}
For fluid-carrying missions with subsequent dumping, a constant-mass model is often adequate, with the total mass treated as unknown and corrected online. The other two cases capture standard leakage mechanisms: an orifice leak governed by Torricelli’s law, and a viscous leak through a crack or narrow tube governed by the Hagen–Poiseuille law.

In what follows, we suppose that estimations of the state variables $x_L, v_L, \dot{v}_L, R_L,$ and $\Omega_L$, as well as the control inputs $\mu_j$, are given. Equation \eqref{xdynred2} can be interpreted as the ideal mass regression. Assuming that $m_L = m_\theta$ for some (constant) parameter which best fits our model, we have:
\begin{equation}\label{eq: exact_mass_dyn}
    \sum_{j=1}^4 \mu_j - m_\theta(\dot{v}_L + ge_3) - \dot{m}_\theta v_L + \Delta_m = 0
\end{equation}
where $\Delta_m$ can be viewed as an additive disturbance to our mass dynamics, corresponding to model-mismatch, or measurement errors in our state variables and accelerations. Let $\hat{\theta}(t) \in \R^k$ denote an estimate for our ideal parameters $\theta$, and $m_{\hat{\theta}}$ denote the corresponding mass estimate. Then, for each $t \ge 0$, we wish to minimize the following quadratic cost function:
$$J(\hat{\theta}) = \frac12 \left\|\sum_{j=1}^4 \mu_j - m_{\hat{\theta}}(\dot{v}_L + ge_3) - \frac{\partial m_{\hat{\theta}}}{\partial t} v_L \right\|^2$$
This motivates the choice $\displaystyle{\dot{\hat{\theta}} = -K \frac{\partial J}{\partial \theta}(\hat{\theta})}$ where $K= \text{diag}(\gamma_1, \dots, \gamma_k)$ with $\gamma_i > 0$ for all $1 \le i \le k$. That is, we update $\hat{\theta}(t)$ by following the instantaneous gradient descent path of the cost $J$ (with time $t$ measurements for state variables/controls frozen), where each parameter has the corresponding \textit{learning rate} $\gamma_i$. Letting $\displaystyle M_\theta = \frac{\partial m_\theta}{\partial t}$ and $w_L = \dot{v}_L + ge_3$, we find that
\small{\begin{equation}\label{eq: theta_regressor}
    \dot{\hat{\theta}}_i(t) = \gamma_i\left(\frac{\partial m_{\hat{\theta}}}{\partial \theta_i}w_L + \frac{\partial M_{\hat{\theta}}}{\partial \theta_i} v_L\right)^T\left(\sum_{j=1}^4 \mu_j - m_{\hat{\theta}}w_L - M_{\hat{\theta}}v_L \right)
\end{equation}}
\normalsize Define the estimation errors $e_m = m_{\hat{\theta}} - m_\theta$ and $e_M = M_{\hat{\theta}} - M_\theta$. Inputting \eqref{eq: exact_mass_dyn} into \eqref{eq: theta_regressor}, we find:
\begin{align*}
   \dot{\hat{\theta}}_i(t) = \gamma_i\left(\frac{\partial m_{\hat{\theta}}}{\partial \theta_i}w_L + \frac{\partial M_{\hat{\theta}}}{\partial \theta_i} v_L\right)^T \left(e_m w_L + e_M v_L - \Delta_m\right)
\end{align*} so that the error dynamics are given by:
\begin{equation*}
\begin{aligned}
\dot e_m &= e_M + \sum_{i=1}^k \frac{\partial m_{\hat\theta}}{\partial \theta_i}\,\dot{\hat\theta}_i = e_M - \sum_{i=1}^k \gamma_i\,\frac{\partial m_{\hat\theta}}{\partial \theta_i}\,
\psi_i^{\!T}\,\eta,\\[2mm]
\dot e_M &= \frac{\partial M_{\hat\theta}}{\partial t} - \frac{\partial M_{\theta}}{\partial t}
+ \frac{\partial M_{\hat\theta}}{\partial \theta}\,\dot{\hat\theta}= R_{\hat\theta} - \sum_{i=1}^k \gamma_i\,\frac{\partial M_{\hat\theta}}{\partial \theta_i}\,
\psi_i^{\!T}\,\eta,
\end{aligned}
\end{equation*}
\begin{equation*}
\psi_i := \frac{\partial m_{\hat\theta}}{\partial \theta_i}w_L
        + \frac{\partial M_{\hat\theta}}{\partial \theta_i}v_L,
\quad
\eta := e_m w_L + e_M v_L - \Delta_m .
\end{equation*} where $R_{\hat{\theta}} := \displaystyle\frac{\partial^2 m_{\hat{\theta}}}{\partial t^2} - \frac{\partial^2 m_\theta}{\partial t^2}$ is a curvature term of the model. Consider the $k \times 3$ matrix $Y$ with row vectors $\frac{\partial m_{\hat{\theta}}}{\partial \theta_i}w_L + \frac{\partial M_{\hat{\theta}}}{\partial \theta_i}v_L$. Letting $\nabla_\theta m_{\hat{\theta}}, \ \nabla_\theta M_{\hat{\theta}}$ denote the column vector of partial derivatives of $m_{\hat{\theta}}, \  M_{\hat{\theta}}$, respectively, it is easy to see that $Y = \nabla_\theta m_{\hat{\theta}} w_L^T + \nabla_\theta M_{\hat{\theta}} v_L^T$, so that we obtain the following condensed form for the error dynamics:
\begin{align*}
\dot e_m &= e_M - (\nabla_{\theta} m_{\hat{\theta}})^T K\, \Phi\, \eta,\quad\dot e_M = R_{\hat{\theta}} - (\nabla_{\theta} M_{\hat{\theta}})^T K\, \Phi\, \eta,
\end{align*}
where
$\Phi := \nabla_\theta m_{\hat{\theta}}\, w_L^T + \nabla_\theta M_{\hat{\theta}}\, v_L^T$.

Note that $K \succ 0$, so that it induces a positive definite quadratic form $\left<x, y\right>_K = x^T K y$ for all $x,y \in \R^k$, with $\|x\|_K^2 \ge \lambda_{\min}(K) \|x\|^2 = \displaystyle \left(\min_{1 \le i \le k} \gamma_i\right)\|x\|^2$ for all $x \in \R^k$. Letting $\xi = \begin{bmatrix}
    e_m & e_M
\end{bmatrix}^T$, we see that the error dynamics can be written as:
\begin{equation}\label{eq: error_compressed}
    \dot{\xi} = -S\xi + \Delta
\end{equation}
Where $S = (S_{ij})$ is the symmetric matrix with elements
\begin{align*}
    S_{11} &= \|\nabla_\theta m_{\hat{\theta}}\|_K^2 \|w_L\|^2 + \left<\nabla_\theta m_{\hat{\theta}}, \nabla_\theta M_{\hat{\theta}}\right>_K (w_L^T v_L) \\
    S_{12} &= \|\nabla_\theta m_{\hat{\theta}}\|^2_K (w_L^T v_L) + \left<\nabla_\theta m_{\hat{\theta}}, \nabla_\theta M_{\hat{\theta}}\right>_K \|v_L\|^2 \\
    & + \left<\nabla_\theta m_{\hat{\theta}}, \nabla_\theta M_{\hat{\theta}}\right>_K \|w_L\|^2 + \|\nabla_\theta M_{\hat{\theta}}\|_K^2 (w_L^T v_L) - 1 \\
    S_{22} &= \left<\nabla_\theta m_{\hat{\theta}}, \nabla_\theta M_{\hat{\theta}}\right>_K (w_L^T v_L) + \|\nabla_\theta M_{\hat{\theta}}\|_K^2 \|v_L\|^2
\end{align*}
and 
$$\Delta = \begin{bmatrix}
    -(\nabla_{\theta} m_{\hat{\theta}})^T K \left(\nabla_\theta m_{\hat{\theta}} w_L^T + \nabla_\theta M_{\hat{\theta}} v_L^T \right) \Delta_m \\
    R_{\hat{\theta}}-(\nabla_{\theta} M_{\hat{\theta}})^T K \left(\nabla_\theta m_{\hat{\theta}} w_L^T + \nabla_\theta M_{\hat{\theta}} v_L^T \right) \Delta_m
\end{bmatrix}$$
is a disturbance to the error dynamics. In order for $S$ to be positive semi-definite, it is necessary that the diagonal elements $S_{11}, S_{22} \ge 0$. However, it is clear that this assumption will not generally hold without further assumptions on the model and the system trajectories. This can be expressed in terms of \textit{persistent excitation} type condition:

\begin{lemma}\label{lemma: persistent_excitation}
    Let $\xi(t) \in C^1(\R, \R^n)$ satisfy $\dot{\xi} = -A\xi + D$ for some $A \in C^0(\R, \text{Sym}(n))$ and $D \in C^0(\R, \R^{n})$. Suppose that $\exists \mu, T, M > 0$ such that
    \begin{align}
        \int_t^{t+T} \lambda_{\min}(A(\tau))d\tau &\ge \mu \label{eq: persistent_ex}\\
        \int_t^{t+T} \lambda_{\min}(A(\tau))^+d\tau &\le M \label{eq: spike_limitation}
    \end{align}
    for all $t \ge 0$, where $\lambda_{\min}(A(\tau))^+ = \max\{\lambda_{\min}(A(\tau)), 0\}$. Then, $\xi$ is exponentially input-to-state stable (eISS) with respect to the inputs $D$. That is, $\exists \lambda > 0$ such that
    \begin{equation*}
        \|\xi(t+T)\| \le \exp(-\mu)\|\xi(t)\| + \exp(M-\mu)\int_t^{t+T} \|D(\tau)\|d\tau
    \end{equation*}
    for all $t \ge 0$. 
\end{lemma}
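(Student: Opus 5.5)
The plan is a comparison argument for the norm $\|\xi\|$, followed by an integrating-factor (Grönwall) estimate on a single window $[t,t+T]$. The two integral conditions \eqref{eq: persistent_ex} and \eqref{eq: spike_limitation} are used only to bound the resulting exponential factors. Throughout, write $a(\tau):=\lambda_{\min}(A(\tau))$. Since $A$ is continuous with values in $\text{Sym}(n)$ and eigenvalues depend continuously on the matrix, $a$ is continuous, so every integral below is well defined.

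\textbf{Step 1 (differential inequality).} To avoid the non-differentiability of $\|\xi\|$ at $\xi=0$, I would work with $V_\varepsilon:=\sqrt{\|\xi\|^2+\varepsilon^2}$ for $\varepsilon>0$. Using $\xi^TA\xi\ge a\|\xi\|^2$ (Rayleigh quotient, since $A$ is symmetric) and Cauchy--Schwarz,
\begin{equation*}
\dot V_\varepsilon=\frac{-\xi^TA\xi+\xi^TD}{V_\varepsilon}\le -a\,\frac{\|\xi\|^2}{V_\varepsilon}+\|D\|.
\end{equation*}
Because $a$ may be negative, I would write $\|\xi\|^2/V_\varepsilon=V_\varepsilon-\varepsilon^2/V_\varepsilon$. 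This gives
\begin{equation*}
\dot V_\varepsilon\le -aV_\varepsilon+\|D\|+\varepsilon|a|,
\end{equation*}
using $\varepsilon^2/V_\varepsilon\le\varepsilon$. The error term $\varepsilon|a|$ vanishes as $\varepsilon\to0$.

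\textbf{Step 2 (integrating factor).} Multiplying by $\exp\big(\int_t^s a\big)$ and integrating over $s\in[t,t+T]$ gives
\begin{equation*}
V_\varepsilon(t+T)\le e^{-\int_t^{t+T}a}\,V_\varepsilon(t)+\int_t^{t+T}e^{-\int_s^{t+T}a}\big(\|D(s)\|+\varepsilon|a(s)|\big)\,ds .
\end{equation*}
Letting $\varepsilon\to0$ yields the same inequality for $\|\xi\|$ without the $\varepsilon$ term.

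\textbf{Step 3 (bounding the kernels).} The first factor satisfies $e^{-\int_t^{t+T}a}\le e^{-\mu}$ directly by \eqref{eq: persistent_ex}. The main point is the convolution kernel, because $a$ is allowed to be negative on parts of the window. For $s\in[t,t+T]$ I would split
\begin{equation*}
\int_s^{t+T}a=\int_t^{t+T}a-\int_t^{s}a\ \ge\ \mu-\int_t^{s}a^+\ \ge\ \mu-M,
\end{equation*}
which uses \eqref{eq: persistent_ex}, then $a\le a^+$, then \eqref{eq: spike_limitation} together with $[t,s]\subseteq[t,t+T]$ and $a^+\ge0$. Hence $e^{-\int_s^{t+T}a}\le e^{M-\mu}$, which is exactly the claimed one-window estimate.

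\textbf{Step 4 (eISS over all time).} Iterating the one-window estimate over $t,\,t+T,\,t+2T,\dots$ gives geometric decay, with rate $\lambda=\mu/T$, plus an input term bounded by a geometric sum of $e^{M-\mu}\sup\int\|D\|$ over windows. This is the exponential ISS conclusion. I expect the only delicate points to be the sign of $a$ in Step 3 and the nonsmoothness handled in Step 1. Both reduce to the inequalities written above, so no real obstacle is anticipated.
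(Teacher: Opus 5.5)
Your proposal is correct and follows the paper's proof. Both arguments use the differential inequality $\frac{d}{dt}\|\xi\|\le -\lambda_{\min}(A)\|\xi\|+\|D\|$, then Gr\"onwall on $[t,t+T]$, then the same splitting $\int_s^{t+T}a\ge\mu-\int_t^s a^+\ge\mu-M$ to bound the kernel. Your $\varepsilon$-smoothing $V_\varepsilon$ repairs the paper's tacit division by $\|\xi\|$ at $\xi=0$. Your Step 4 goes beyond the paper's one-window estimate to genuine exponential decay; for times strictly inside a window, the same splitting gives $\int_t^{t+s}a\ge\mu-M$, which bounds the transient between window endpoints and is worth stating explicitly.
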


\begin{proof}
    See Appendix.
\end{proof}

In general, a priori verification of \eqref{eq: persistent_ex}-\eqref{eq: spike_limitation} for the obtained error dynamics \eqref{eq: error_compressed} is highly non-trivial. A sufficient condition is for $S$ to be positive-definite with a global upper bound on its minimum eigenvalue. While a global upper bound on the eigenvalues of $S$ corresponds to the very typical assumption of bounded state variables, together with a bounded parametric model for the mass and rate of change of the mass, it is expected in practice that $S$ will \textit{not} be positive-definite. In such a case, \eqref{eq: persistent_ex}-\eqref{eq: spike_limitation} are best regarded as conditions to be numerically verified in simulations prior to execution in real-world applications, or otherwise can be treated as a guide to path-planning. In the sequel, we make the latter notion explicit. 

Since $\lambda_{\min}$ is concave, a necessary condition to satisfy \eqref{eq: persistent_ex} is that
$$\lambda_{\min}\left(\int_t^{t+T} S(\tau)d\tau \right) \ge \mu$$
for all $t \ge 0$, which provides the three elementwise conditions:
\begin{align*}
    &\int_t^{t+T} S_{11}(\tau)d\tau > 0, \quad \int_t^{t+T} S_{22}(\tau)d\tau > 0, \\
    &\quad \int_t^{t+T} \left(S_{11}(\tau)S_{22}(\tau) - S_{12}(\tau)^2\right) d\tau > 0
\end{align*}

Suppose the mass model is uniformly bounded in the sense that there exist constants
\(
0<\underline a\le \overline a,\;
0<\underline b\le \overline b,\;
\overline c>0
\)
such that, for all admissible estimates $\hat\theta$ and all $t\ge 0$,
\begin{equation}\label{eq:grad_bounds}
\begin{split}
    \underline a \le \|\nabla_\theta m_{\hat\theta}(t)&\|_{K} \le \overline a,
\qquad
\underline b \le \|\nabla_\theta M_{\hat\theta}(t)\|_{K} \le \overline b, \\
&\big|\langle \nabla_\theta m_{\hat\theta}(t),\nabla_\theta M_{\hat\theta}(t)\rangle_{K}\big|\le \overline c,
\end{split}
\end{equation}
Then the diagonal entries of $S(t)$ admit the lower bounds
\begin{equation}\label{eq:Sdiag_lower}
\begin{split}
  &S_{11}(t)\ge \underline a^{\,2}\|w_L(t)\|^2-\overline c\,|w_L(t)^\top v_L(t)| \\ 
&S_{22}(t)\ge \underline b^{\,2}\|v_L(t)\|^2-\overline c\,|w_L(t)^\top v_L(t)|
\end{split}
\end{equation}
Hence, a convenient sufficient condition for the windowed diagonal positivity
$\int_t^{t+T}S_{ii}(\tau)\,d\tau>0$ is that, for all $t\ge 0$,
\begin{equation}\label{eq:diag_sufficient}
\begin{split}
    \int_t^{t+T}\Big(\underline a^{\,2}\|w_L(\tau)\|^2-\overline c\,|w_L(\tau)^\top v_L(\tau)|\Big)\,d\tau &>0 \\
    \int_t^{t+T}\Big(\underline b^{\,2}\|v_L(\tau)\|^2-\overline c\,|w_L(\tau)^\top v_L(\tau)|\Big)\,d\tau &>0.
\end{split}
\end{equation}
Condition \eqref{eq:diag_sufficient} can be understood as requiring that, on average, the vectors
$v_L$ and $w_L:=\dot v_L+ge_3$ are not nearly collinear on windows of the form $[t,t+T]$.

Alternatively, recognizing that
\begin{equation}\label{eq:energy_rate_identity}
w_L^\top v_L
=
(\dot v_L+ge_3)^\top v_L
=
\frac{d}{dt}\Big(\tfrac12\|v_L\|^2+ge_3^\top x_L\Big)
=:\dot H,
\end{equation}
we may interpret \eqref{eq:diag_sufficient} in terms of a windowed energy-drift constraint.
Indeed, by \eqref{eq:energy_rate_identity},
\begin{align*}
|H(t+T)-H(t)|&\le \int_t^{t+T}|\dot H(\tau)|\,d\tau \\
&= \int_t^{t+T}|w_L(\tau)^\top v_L(\tau)|\,d\tau,
\end{align*}
and thus \eqref{eq:diag_sufficient} implies, for all $t \ge 0$, the sufficient condition
\begin{equation}\label{eq:energy_drift_sufficient}
\begin{split}
    |H(t+T)-H(t)|
&<
\frac{1}{\overline c}\min\Bigg\{
\underline a^{\,2}\int_t^{t+T}\|w_L(\tau)\|^2\,d\tau,\; \\
&\qquad\qquad\underline b^{\,2}\int_t^{t+T}\|v_L(\tau)\|^2\,d\tau
\Bigg\}.
\end{split}
\end{equation}

Fix a horizon $T>0$ and a tension parameter $\tau\ge 0$.
On each window $[t,t+T]$, a \emph{spline in tension} is obtained by solving the variational problem
\begin{equation}\label{eq:tension_spline_action}
\min_{x\in\Omega}\int_t^{t+T}\Big(\|\ddot x(\sigma)+ge_3\|^2+\tau\|\dot x(\sigma)\|^2\Big)\,d\sigma,
\end{equation}
where $\Omega$ denotes the space of $C^1$ curves, piecewise $C^2$, satisfying boundary
conditions in $(x,\dot x)$ at $\sigma=t$ and $\sigma=t+T$.
For clamped endpoints, the gravity offset contributes only a boundary term in the action, and the interior Euler--Lagrange equation associated with \eqref{eq:tension_spline_action} reduces to 
\begin{equation}\label{eq:tension_spline_ode}
x^{(4)}(\sigma)- \tau x^{(2)}(\sigma)=0,\qquad \sigma\in[t,t+T].
\end{equation}
Thus, for $\tau>0$ the optimal segments are hyperbolic curves, while
for $\tau=0$ they reduce to cubic polynomials.
In the present context, \eqref{eq:tension_spline_action} is a natural smoothness surrogate, but it
is also important to note that it \emph{penalizes} precisely the quantities $\|w_L\|^2$ and
$\|v_L\|^2$ that appear in the sufficient excitation inequalities
\eqref{eq:diag_sufficient}--\eqref{eq:energy_drift_sufficient}. Hence, for fixed boundary
conditions, tension (and cubic) splines can be regarded as a conservative baseline from the point of view of identification, in the sense that the resulting trajectories provide limited excitation for mass estimation.

On the other hand, consider the quadratic net-thrust optimal control problem on $[t,t+T]$,
\begin{equation}\label{eq:quad_effort_ocp}
\min_{\mu(\cdot)}\int_t^{t+T}\|\mu(\sigma)\|^2\,d\sigma 
\end{equation} subject to $\mu(\sigma)=m_L(\sigma)\big(\ddot x_L(\sigma)+ge_3\big)+\dot m_L(\sigma)\dot x_L(\sigma)$
together with the same boundary conditions in $(x_L,\dot x_L)$.
In the constant-mass case $m_L(\sigma)\equiv m$, \eqref{eq:quad_effort_ocp} reduces (up to a
constant factor) to minimizing $\int\|\ddot x_L+ge_3\|^2$, whose stationary curves are cubic segments (and cubic splines in the presence of knots).
Moreover, introducing the relative mass-rate $\alpha(\sigma):=\dot m_L(\sigma)/m_L(\sigma)$,
if $\alpha(\sigma)$ remains approximately constant over a sufficiently short window, then
\eqref{eq:quad_effort_ocp} reduces (up to boundary terms) to the tension-spline action
\[
\int_t^{t+T}\Big(\|\ddot x_L(\sigma)+ge_3\|^2+\alpha(\sigma)^2\|\dot x_L(\sigma)\|^2\Big)\,d\sigma,
\]
so that the minimizers are approximated, to first order in $\sup_{\sigma\in[t,t+T]}|\dot\alpha(\sigma)|\,T$,
by a spline in tension with $\tau=\alpha^2$.
Taken together, these observations highlight a basic path-planning trade-off:
minimum-cost references tend to suppress $\|v_L\|$ and $\|w_L\|$, while persistent excitation
demands nontrivial variation in these signals for mass estimation.

A pragmatic compromise is therefore to use a smooth nominal reference (e.g., a tension spline on each window) and superimpose a small, bandwidth-limited excitation signal (i.e. a dither) chosen to satisfy the windowed inequalities \eqref{eq:diag_sufficient} (or the energy-drift surrogate \eqref{eq:energy_drift_sufficient}) while respecting actuator limits. This preserves near-minimum effort tracking in the nominal sense, but intentionally injects sufficient variability in the velocity and acceleration fields to support convergence of the mass estimator.

\subsubsection{Constant Mass Model}

Consider the special case where the mass is assumed to be (piecewise) constant, corresponding to a fluid filled cargo with discrete discharge times. That is, the model is simply given by $m_\theta(t) = \theta$, so that $M_\theta \equiv 0$ and $\displaystyle \frac{\partial m_\theta}{\partial \theta} = 1$. We denote the parameter estimate by $\hat{\theta} = \hat{m}$ for convenience. From Equation \eqref{eq: theta_regressor}, the mass estimation is updated via $\displaystyle{\dot{\hat{m}}(t) = \gamma w_L^T \left(\sum_{j=1}^4 \mu_j - \hat{m}w_L \right)}$
which yields the error dynamics $\dot{e}_m = -\gamma \|w_L\|^2 e_m + \gamma w_L^T \Delta_m$.
Lemma \ref{lemma: persistent_excitation} implies the following result.
\begin{proposition}
    Suppose that the load acceleration is bounded uniformly bounded, and  that there exist positive constants $\mu, T > 0$ such that
    \begin{equation}\label{eq: PE_constant_mass}
        \int_t^{t+T} \|\ddot{x}_L(t) + ge_3\|^2 dt \ge \mu
    \end{equation}
    for all $t \ge 0$. Then the error dynamics are eISS with respect to the input $\Delta_m$.
\end{proposition}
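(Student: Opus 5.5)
The plan is to apply Lemma~\ref{lemma: persistent_excitation} with $n=1$ to the scalar error dynamics
\[
\dot e_m = -\gamma\|w_L\|^2 e_m + \gamma w_L^T\Delta_m .
\]
We set $A(t) := \gamma\|w_L(t)\|^2$ and $D(t) := \gamma w_L(t)^T\Delta_m(t)$. Since $n=1$, we have $\lambda_{\min}(A(\tau)) = A(\tau) = \gamma\|\ddot x_L(\tau)+ge_3\|^2 \ge 0$, so $\lambda_{\min}(A)^+ = A$. Continuity of $A$ and $D$ will be taken from the standing regularity of the state estimates, with $\dot v_L$ and $\Delta_m$ continuous. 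Then $e_m$ is $C^1$ as the solution of a linear ODE with continuous coefficients.

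We then check the two hypotheses of the lemma.

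\textbf{Persistent excitation \eqref{eq: persistent_ex}.} Hypothesis \eqref{eq: PE_constant_mass} gives, for every $t\ge 0$,
\[
\int_t^{t+T}\lambda_{\min}(A(\tau))\,d\tau = \gamma\int_t^{t+T}\|\ddot x_L+ge_3\|^2\,d\tau \ge \gamma\mu .
\]
So the lemma's constant $\mu$ is taken to be $\gamma\mu$.

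\textbf{Spike limitation \eqref{eq: spike_limitation}.} Uniform boundedness of the load acceleration, $\|\ddot x_L\|\le a_{\max}$, gives $\|w_L\|\le a_{\max}+g$. Hence
\[
\int_t^{t+T}\lambda_{\min}(A)^+\,d\tau \le \gamma T (a_{\max}+g)^2 =: M .
\]
This is the only place the boundedness assumption is used.

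With both hypotheses verified, the lemma yields
\[
|e_m(t+T)| \le e^{-\gamma\mu}|e_m(t)| + e^{M-\gamma\mu}\int_t^{t+T}\gamma|w_L^T\Delta_m|\,d\tau ,
\]
and we bound $|w_L^T\Delta_m|\le (a_{\max}+g)\|\Delta_m\|$. This expresses the estimate in terms of the input $\Delta_m$ with a uniform gain $\gamma(a_{\max}+g)e^{M-\gamma\mu}$.

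To obtain the eISS statement for all times, not just windowed steps, we iterate the window inequality over $t, t+T, t+2T,\dots$, which gives geometric decay $e^{-k\gamma\mu}$ plus a summable bound $\frac{C}{1-e^{-\gamma\mu}}\sup\|\Delta_m\|$. For intermediate times we use the Grönwall bound $|e_m(s)|\le |e_m(t)| + \int_t^s|D|$, which holds since $A\ge0$.

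The main obstacle is minor. One point is the bookkeeping that turns the window estimate into a genuine eISS bound in terms of $\Delta_m$ rather than $D$. The other is ensuring the continuity hypotheses of the lemma; in particular, $\ddot x_L$ must be continuous, or else the lemma must be reread for measurable, locally integrable coefficients.
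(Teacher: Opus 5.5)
Your proposal is correct and takes the same route as the paper. The paper simply invokes Lemma~\ref{lemma: persistent_excitation} on the scalar error dynamics with $A=\gamma\|w_L\|^2$ and $D=\gamma w_L^T\Delta_m$, using the excitation hypothesis for \eqref{eq: persistent_ex} and the acceleration bound for \eqref{eq: spike_limitation}, as you do. Two small points: your iteration over windows gives an all-time eISS bound, which goes slightly beyond the paper's windowed notion; and the acceleration bound is also needed when you pass from $D$ to $\Delta_m$, so the spike-limitation step is not the only place it is used.
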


The contrast between mass estimation and optimal control is even sharper: cubic splines \emph{exactly} minimize the $L^2$ control effort \eqref{eq:quad_effort_ocp}, and thus minimize the excitation term in \eqref{eq: PE_constant_mass}. Accordingly, the same planning trade-off applies, but we select cubic-spline references augmented with a small dither to induce nontrivial velocity and acceleration variations.

\subsection{Inertia Estimation}
Next, we model the load’s time-varying inertia tensor $\mathbb{J}_L(t)$. Unlike mass, online inertia identification is generally impractical due to the high-dimensional PDE nature of internal fluid motion. Instead, using known tank geometry and a hydrostatic approximation, we precompute the tank–fluid inertia offline as a function of fill level and load-frame gravity direction, store it in a lookup table, and query it online using the estimated mass and measured attitude. Thus, inertia “estimation” is treated as a surrogate that reproduces the load rotational dynamics, and we specify operating and trajectory conditions under which the hydrostatic model remains valid and robust to unmodeled sloshing.

\subsection*{From Geometry to Inertia}
Suppose that the empty tank has mass $m_\mathcal{T}$ and the shape of the cavity $\mathcal{T} \subset \R^3$ is known, and has volume $V_{\mathcal{T}} := \text{Vol}(\mathcal{T})$. The tank is to be filled (either partially or entirely) with a fluid of known uniform density $\rho$. From the previous section, we may assume that for each $t \ge 0$, we have an estimate $\hat{m}(t)$ for the mass of the load (i.e. the combined mass of the tank and the fluid). Let $\sigma(t) \in [0,1]$ denote the proportion of the tank that is filled with fluid at time $t \ge 0$, which we call the \textit{fill level}. We see that
\begin{equation}\label{eq: fill_level}
    \sigma(t) = \frac{m(t) - m_{\mathcal{T}}}{\rho V_{\mathcal{T}}}
\end{equation}
for all $t \ge 0$. Given an estimate $\hat{m}(t)$ for the mass of the tank (for example, using the techniques described in the previous section), we similarly obtain an estimate $\hat{\sigma}(t)$ for the fill level using equation \eqref{eq: fill_level}. Moreover, the volume of the fluid at time $t \ge 0$ is given by $V_{\mathcal{F}}(t) = \sigma(t)V_{\mathcal{T}}$. To obtain an estimate for the moment of inertia, we will additionally need to know the region $\mathcal{F}(t) \subset \mathcal{T}$ that the fluid occupies at time $t \ge 0$. Accounting for nonlinear effects (sloshing, nonlinear waves, etc.) would be extraordinarily complicated, so we assume that the tank is sufficiently stationary that hydrostatic equilibrium is (approximately) attained. Later, we will discuss path-planning conditions that make this assumption reasonable in practice. In such a case, the surface of the fluid will look like a plane normal to the direction of gravity in the body frame, which is given by $g_L(t) = -R_L(t)^T e_3$. The fluid region $\mathcal{F}(t)$ is of the form:
\begin{equation}\label{eq: fluid_region}
    \mathcal{F}_h(t) = \mathcal{T} \cap \{x \in \R^3 \ \vert \ g_L(t)^T x \le h(t)\}
\end{equation}
for some $h(t) \in \R$ defined by the relation
$$\int_{\mathcal{F}_h(t)}dV =  \text{Vol}(\mathcal{F}_h(t)) = V_{\mathcal{F}}(t) = \sigma(t)V_{\mathcal{T}}$$
For fixed $t \ge 0$, the function $h \mapsto \text{Vol}(\mathcal{F}_h(t))$ is continuous, monotonically increasing, and satisfies
$$0 \le \text{Vol}(\mathcal{F}_h(t)) \le V_{\mathcal{T}}, \qquad \forall h \in \R$$
Hence, there exists a unique parameter $h(t)  = h(t)^\ast$ satisfying \eqref{eq: fluid_region} for all $t \ge 0$, which can be numerically approximated by a simple bisector method. Fix some reference point $\mathcal{O}$ on the tank (for example, the geometric center) in the body frame. Then, the moment of inertia tensors of the empty tank and the fluid with respect to $\mathcal{O}$ are given by: 
\begin{align*}
    \J_{\mathcal{T}, \mathcal{O}} &= \frac{m_{\mathcal{T}}}{V_{\mathcal{T}}} \int_{\mathcal{T}} (\|x\|^2 I_3 - xx^T)dV \\
    \J_{\mathcal{F}, \mathcal{O}}(t) &= \rho \int_{\mathcal{F}(t)} (\|x\|^2 I - xx^T)dV
\end{align*}
respectively, where $x$ denotes the position vector of a point with respect to the origin $\mathcal{O}$. The inertia tensor that appears in the dynamics is taken with respect to the center of mass of the load, which can be calculation via:
$$\mathcal{O}_{\text{cm}}(t) = \frac{m_{\mathcal{T}}}{\hat{m}(t)V_{\mathcal{T}}} \int_{\mathcal{T}} xdV + \frac{\hat{m}(t) - m_{\mathcal{T}}}{\hat{m}(t)V_{\mathcal{F}}(t)} \int_{\mathcal{F}(t)} xdV$$
as long as $V_{\mathcal{F}}(t) \ne 0$ (if not, we simply work with the tank's inertia directly).
From the parallel-axis theorem, the inertia tensor of the load (with respect to its center of mass), is given by
\begin{equation*}
\begin{aligned}
\J_L(t) &= \J_{\mathcal{T}, \mathcal{O}} + \J_{\mathcal{F}, \mathcal{O}}(t) \\
&\quad - \hat{m}(t)\Big(\|\mathcal{O}_{\mathrm{cm}}(t)\|^2 I_3
- \mathcal{O}_{\mathrm{cm}}(t)\mathcal{O}_{\mathrm{cm}}(t)^T\Big).
\end{aligned}
\end{equation*}

Such a calculation may be expensive to perform online, so we propose to perform the calculations offline for a sufficiently rich sampling of potential orientations $g_L \in S^2$ and fill fractions $\sigma$, and subsequently create a look-up table that can cheaply and easily be consulted in real-time based on estimates for the mass $\hat{m}(t)$ and orientation $R_L(t)$ of the load.

\subsection*{Fluid dynamics and validity of the hydrostatic approximation}

To characterize when the fluid region $\mathcal{F}(t)$ defined in \eqref{eq: fluid_region} provides a reasonable approximation of the true fluid configuration, we briefly review the governing fluid dynamics.

In the inertial frame, the motion of an incompressible Newtonian fluid of constant density $\rho$ and kinematic viscosity $\nu$ is governed by the Navier--Stokes equations
\begin{equation}
\label{eq:NS_inertial}
\frac{\partial v}{\partial t} + (v \cdot \nabla)v
=
-\frac{1}{\rho}\nabla P
+ \nu \Delta v
- ge_3,
\end{equation}
where $v(x,t)$ and $P(x,t)$ denote the velocity and pressure fields, respectively. Consider the body-frame coordinates $r \in \mathbb{R}^3$ by $x = x_L(t) + R_L(t) r$. Define the velocity of the fluid relative to the tank, expressed in body coordinates, as
\begin{equation}
\label{eq:relative_velocity}
u(r,t)
=
R_L(t)^\top\big(v(x,t) - \dot{x}_L(t)\big)
-
\Omega_L(t) \times r,
\end{equation}
where $\Omega_L(t)$ satisfies $\dot{R}_L = R_L \hat{\Omega}_L$. Expressed in the rotating and translating body frame, the Navier--Stokes equations take the standard form
\begin{align}
\label{eq:NS_body}
\frac{\partial u}{\partial t} + (u \cdot \nabla)u
=&
-\frac{1}{\rho}\nabla \tilde{P}
+ \nu \Delta u
+ g_L
- a_L
- \dot{\Omega}_L \times r \nonumber\\
& - \Omega_L \times (\Omega_L \times r)
- 2\,\Omega_L \times u,
\end{align}
where all quantities are expressed in body coordinates:
\[
\tilde{P}(r,t) = P(x_L(t) + R_L(t)r,t), \qquad
a_L = R_L^\top \ddot{x}_L,
\]
and the differential operators are taken with respect to $r$. 

Hydrostatic equilibrium corresponds to the fluid being at rest relative to the tank, i.e., $u(r,t) \equiv 0$.

In this case, viscous stresses vanish identically and the momentum balance reduces to
\begin{equation}
\label{eq:hydrostatic_balance}
\nabla \tilde{P}
=
\rho\Big(
g_L
- a_L
- \dot{\Omega}_L \times r
- \Omega_L \times (\Omega_L \times r)
\Big).
\end{equation}
Consequently, surfaces of constant pressure are orthogonal to the effective acceleration field acting on the fluid.

If the translational acceleration $a_L$, angular velocity $\Omega_L$, and angular acceleration $\dot{\Omega}_L$ are sufficiently small, the dominant contribution to \eqref{eq:hydrostatic_balance} is gravity, and the free surface of the fluid is well-approximated by a plane orthogonal to $g_L(t) = -R_L(t)^\top e_3$, as we previously assumed in \eqref{eq: fluid_region}.

Departures from hydrostatic equilibrium occur when the effective forcing in \eqref{eq:hydrostatic_balance} varies rapidly. In particular, large translational jerk $\dddot{x}_L$, angular acceleration $\dot{\Omega}_L$, or angular jerk $\ddot{\Omega}_L$ can excite internal fluid motion (sloshing) through the unsteady and convective terms of the body-frame Navier--Stokes equations \eqref{eq:NS_body}. From a frequency-domain viewpoint, this happens when the forcing contains energy near the dominant sloshing modes, i.e., when the forcing varies on time scales comparable to or faster than the fluid response. Thus, to promote validity of the hydrostatic region model \eqref{eq: fluid_region}, we impose trajectory-smoothness conditions that limit the magnitude and bandwidth of $\dddot{x}_L$ and $\ddot{\Omega}_L$, while keeping $a_L$ and $\dot{\Omega}_L$ moderate so gravity remains dominant in \eqref{eq:hydrostatic_balance}.

This motivates smooth, low-jerk reference trajectories for the load. In $\R^3$, minimum-jerk curves (minimizers of $\int_0^T |\dddot{x}(t)|^2dt$ under standard boundary conditions) are quintic polynomials, and the concept extends to Riemannian manifolds via minimum-jerk variational curves defined through third-order covariant derivatives \cite{camarinha1995splines}. In our setting, such trajectories reduce excitation of unmodeled fluid dynamics and make the look-up-table inertia strategy meaningful.

We (i) generate smooth, low-jerk load references using (Riemannian) splines (e.g., cubics or splines in tension), (ii) inject small low-frequency dithers when mass is unmeasured to ensure sufficient variability for mass convergence, (iii) estimate the load mass $\hat m(t)$ and attitude $R_L(t)$ online, and (iv) update inertia via an offline look-up table from \eqref{eq: fluid_region} indexed by $(\hat m(t),R_L(t))$ (equivalently by fill fraction $\hat\sigma(t)$ and body-frame gravity direction $g_L(t)=-R_L(t)^\top e_3$). Steps (i)–(ii) are path-planning choices that enforce the regularity needed for (iii)–(iv): they limit rapid forcing variations to mitigate sloshing, yet retain enough excitation to identify mass in the absence of direct measurements.

\section{Simulation Results}

In this section we present numerical simulations of the cooperative aerial load transportation system composed of four quadrotors connected to a rigid load via massless cables.  The simulation horizon is $15$~s and all initial conditions are slightly perturbed from the nominal hovering configuration.

The system state includes the load translational and rotational states $(x_L,v_L,R_L,\Omega_L)$, as well as the direction $q_j\in S^2$ and angular velocity $w_j$ of each cable. The load mass is either constant or time-varying depending on the scenario, and an online estimator $\hat m(t)$ is used by the controller for gravity compensation. The inertia tensor $\J_L(m_L)$ is updated consistently. Sensor measurements for position, velocity, attitude, and angular velocity include small deterministic noise terms. These are implemented as smooth sinusoidal perturbations with amplitudes of $0.01$--$0.02$ in position and velocity, and $0.005$ in attitude, consistent with typical low-amplitude sensor disturbances.  
Furthermore, an external disturbance force emulating wind is optionally applied.  
The wind disturbance is modeled as a smooth, time-varying signal of the form
$F_{\mathrm{wind}}(t) = 0.3
\begin{bmatrix}
\sin(0.4 t) &
\cos(0.6 t) &
\sin(0.8 t)
\end{bmatrix} \! \text{N},
$
which acts directly on the translational dynamics of the load.  
These perturbations are representative of environmental disturbances encountered in outdoor flight and allow evaluating the robustness of the proposed control architecture.  

Two operating modes are considered: in the constant-mass case, the load mass remains fixed during the simulation; in the variable-mass case, the load mass decreases exponentially, which modifies both $m_L(t)$ and the inertia tensor $\J_L(t)$, and the estimator compensates for this variation using the measured acceleration. Fig.~\ref{fig:position} illustrates the evolution of the load position components for both mass scenarios. For the constant-mass case the trajectory converges smoothly to the desired equilibrium at the origin.  
When the mass varies, the initial mismatch between $m_L(0)$ and $\hat m(0)$ produces a larger transient, but the controller ultimately stabilizes the load once the mass estimate converges.

\begin{figure}[h!]
    \centering
    \includegraphics[width=\linewidth]{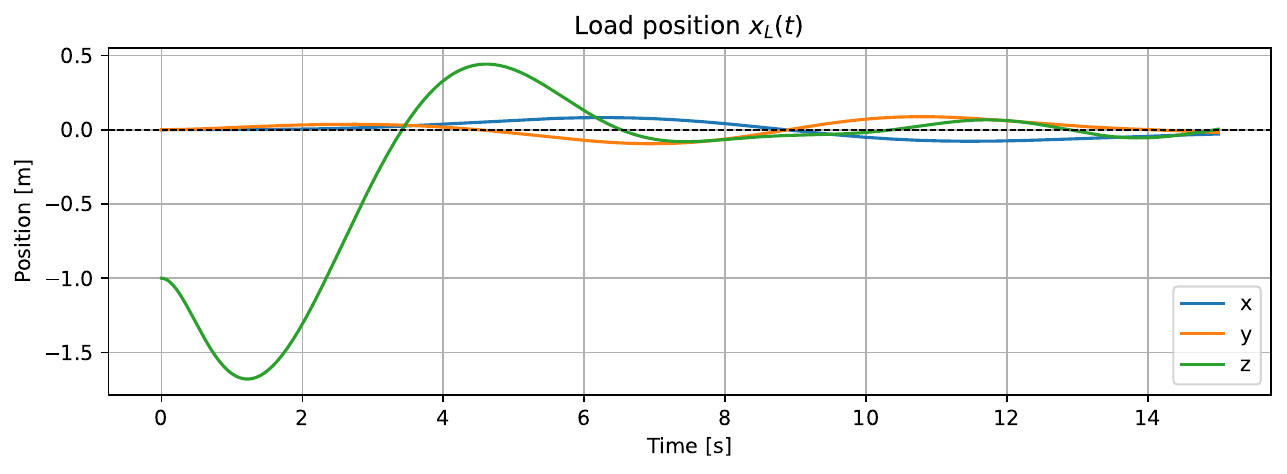}\\[4pt]
    \includegraphics[width=\linewidth]{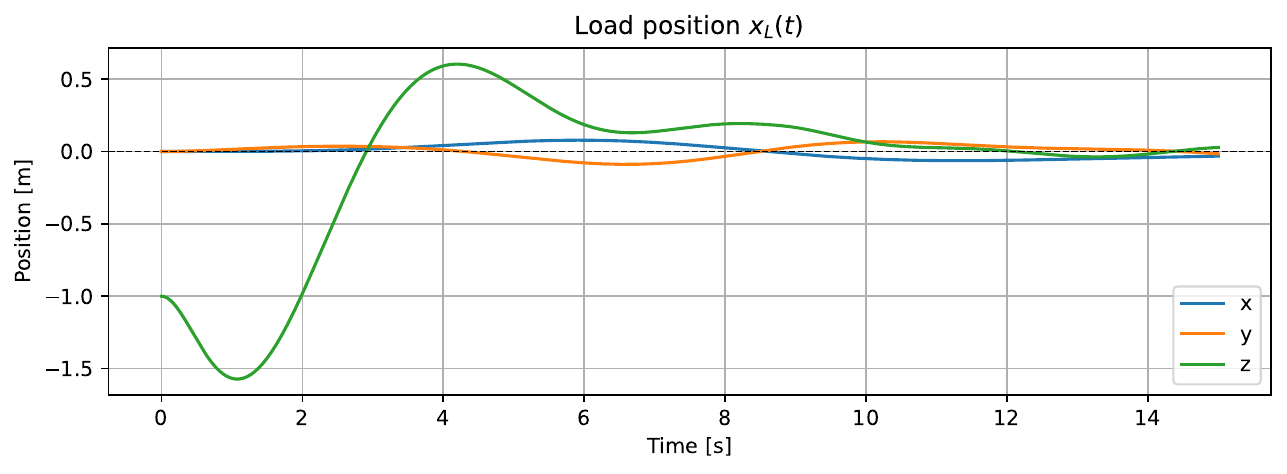}
    \caption{Load position trajectory $x_L(t)$ under (top) constant mass and (bottom) exponentially decaying mass.}
    \label{fig:position}
\end{figure}

Fig.~\ref{fig:mass} compares the true mass $m_L(t)$ with the online estimate $\hat m(t)$ produced by the adaptive law.  
In the constant-mass case the estimator converges rapidly despite the large initial error.  
For the variable-mass case, the estimator successfully tracks the exponential decay, enabling accurate gravity compensation throughout the simulation.

\begin{figure}[t]
    \centering
    \includegraphics[width=\linewidth]{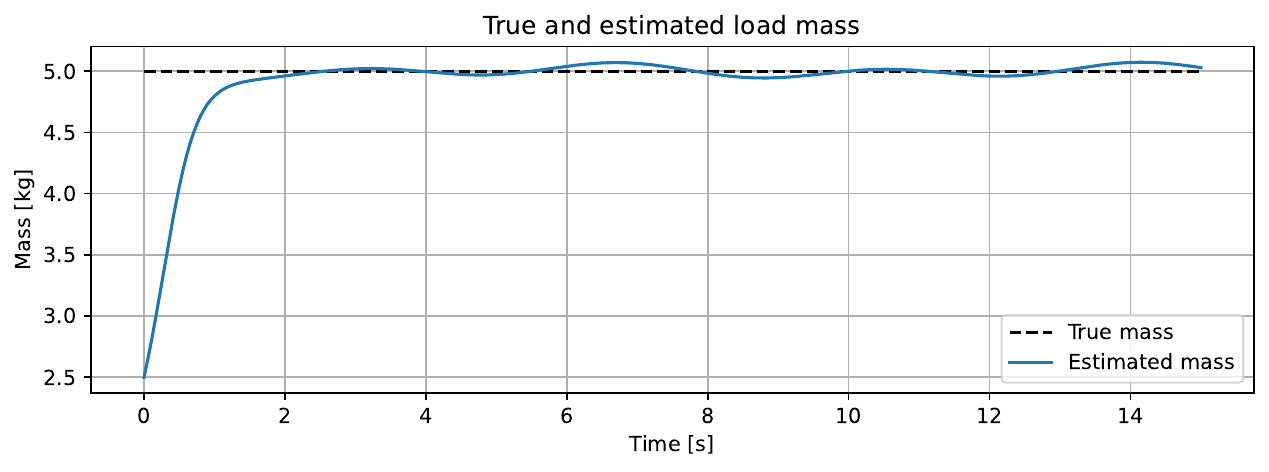}\\[4pt]
    \includegraphics[width=\linewidth]{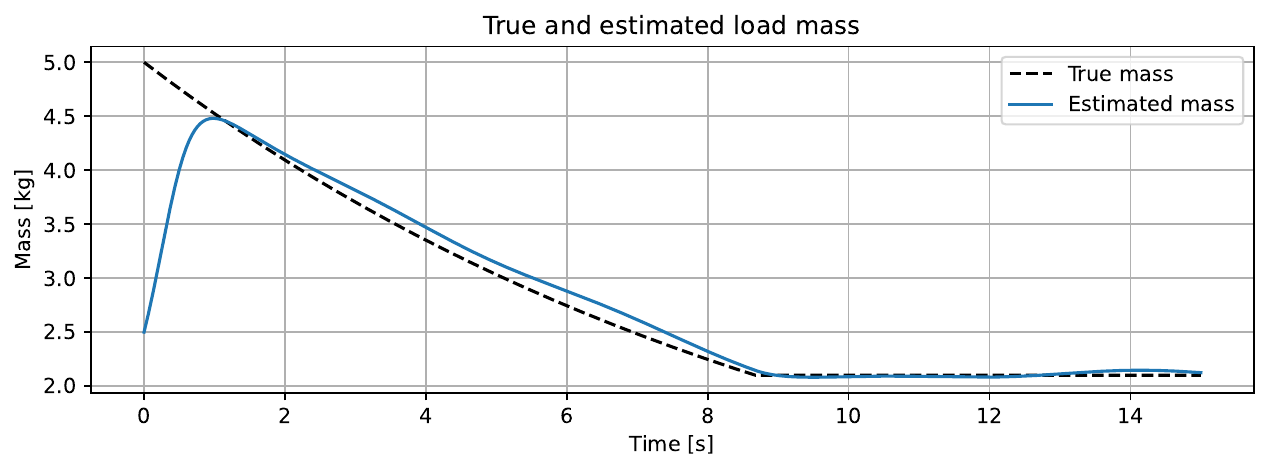}
    \caption{True and estimated load mass for the two simulation scenarios.}
    \label{fig:mass}
\end{figure}

The diagonal entries of the load inertia tensor, together with their estimates, are displayed in Fig.~\ref{fig:inertia}.  
When the mass is constant, the inertia remains fixed and the estimator produces a matching constant value.  
In the variable-mass scenario, the inertia decreases smoothly according to the mass decay model, and the estimate derived from $\hat m(t)$ accurately captures this evolution.

\begin{figure}[t]
    \centering
    \includegraphics[width=\linewidth]{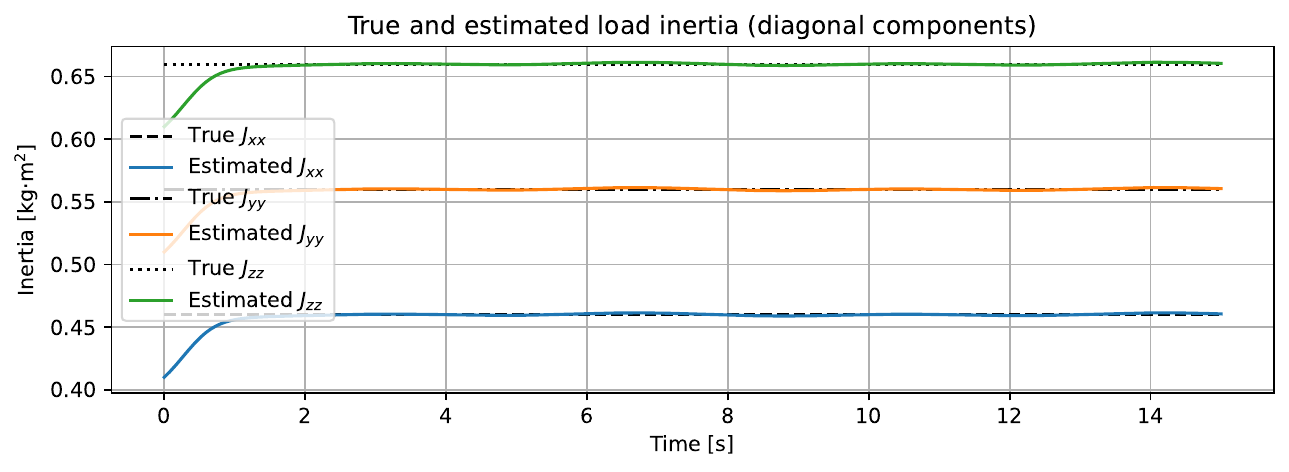}\\[4pt]
    \includegraphics[width=\linewidth]{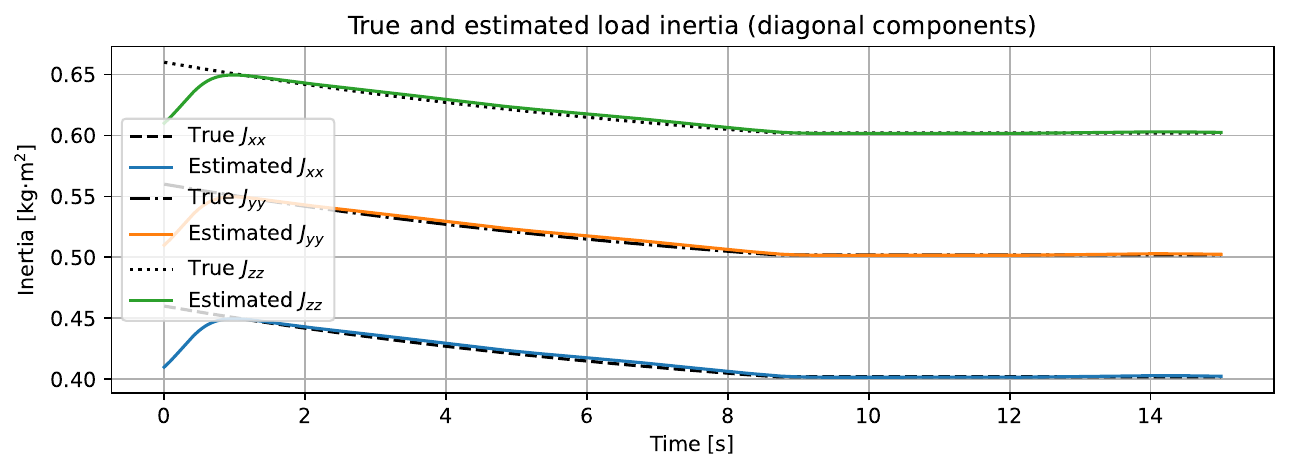}
    \caption{Diagonal elements of the inertia tensor $J_L(t)$ (true vs.\ estimated) for both mass models.}
    \label{fig:inertia}
\end{figure}



\section*{Conclusions}
We studied load-parameter estimation for cooperative aerial transportation when a rigid payload exhibits time-varying mass and inertia, as in fluid-carrying operations with evolving internal mass distribution. Building on an intrinsic geometric model of a multi-quadrotor--load system with elastic tethers, we developed an online mass estimator driven by standard state estimates and commanded inputs, and proposed a hydrostatic, geometry-informed inertia surrogate implemented via an offline look-up table indexed by the estimated fill level and measured load attitude. Numerical simulations show accurate mass adaptation and effective inertia updates, yielding robust closed-loop performance under variable payload properties. Future work will focus on multi-UAV experimental validation, including improved onboard state estimation for load acceleration/attitude, trajectory-design guidelines balancing persistent excitation with low-jerk motion, and extensions beyond the hydrostatic regime via sloshing-aware reduced-order models or data-driven residual corrections under actuation limits, delays, and unmodeled aerodynamics.

\section{Appendix}
\textit{Proof of Lemma \ref{lemma: persistent_excitation}}: By direct calculation,
    $$\frac{d}{dt} \|\dot{\xi}\| = \frac{-\xi^T A \xi + \xi^T D}{\|\xi\|} \le -\lambda_{\min}(A)\|\xi\| + \|D\|$$
    Since $A$ is symmetric. Using Gronwall's inequality on the interval $[t, t+T]$, we see that
    \begin{align*}
        \|\xi(t+T)\| &\le \exp\left(-\int_t^{t+T} \lambda_{\min}(A(\tau))d\tau \right)\|\xi(t)\| \\
        &+ \int_t^{t+T} \exp\left( -\int_s^{t+T}\lambda_{\min}(A(\tau))d\tau\right)\|D(s)\|ds
    \end{align*}
    Next, we observe that for all $s \in [t, t+T]$,
    \begin{align*}
        \int_s^{t+T} \lambda_{\min}(A(\tau))d\tau &= \int_t^{t+T} \lambda_{\min}(A(\tau))d\tau - \int_t^s \lambda_{\min}(A(\tau))d\tau \\
        &\ge \mu - \int_t^s \lambda_{\min}(A(\tau))^+d\tau \\
        &\ge \mu - \int_t^{t+T} \lambda_{\min}(A(\tau))^+ d\tau \\
        &\ge \mu - M
    \end{align*}
    Hence, for all $t \ge 0$,
    $$\|\xi(t+T)\| \le \exp(-\mu)\|\xi(t)\| + \exp(M-\mu)\int_t^{t+T} \|D(\tau)\|d\tau.$$


\begin{thebibliography}{99}

\bibitem{pacheco1} V. P. Bacheti, D. K. D. Villa, R. Lozano, M. Sarcinelli-Filho, and P. Castillo,
“Tension-Based Reconfigurable Multi-Agent Formation for Aerial Load Transportation,”
IEEE Access, vol. 13, pp. 114098–114116, 2025.



\bibitem{camarinha1995splines}
M.~Camarinha, F.~Silva~Leite, and P.~Crouch,
``Splines of class $C^k$ on non-Euclidean spaces,''
\emph{IMA Journal of Mathematical Control and Information},
vol.~12, no.~4, pp.~399--410, 1995.


\bibitem{pacheco2} E. D. S. Cardoso, V. P. Bacheti, and M. Sarcinelli-Filho,
“A Leader-Follower Control System for a Package-Delivery Drone,”
IEEE Access, vol. 13, pp. 55313–55323, 2025.


\bibitem{Fink} M. Fink, N. Michael, S. Kim, and V. Kumar, “Planning and control for cooperative manipulation and transportation with aerial robots,” Int. J. Robotics Research, vol. 30, no. 3, pp. 324–334, 2011.



\bibitem{jacob1} Goodman, J. R., Beckers, T., $\&$ Colombo, L. J., “Geometric Control for Load Transportation with Quadrotor UAVs by Elastic Cables”, IEEE Transactions on Control Systems Technology, vol. 31, no. 6, pp. 2848-2862, 2023.

\bibitem{jacob2} Goodman, J. R. $\&$ Colombo, L. (2022). Geometric Control of Two Quadrotors Carrying a Rigid Rod with Elastic Cables. Journal of Nonlinear Science, 32:65.

\bibitem{Elastic}
P. Kotaru, G. Wu and K. Sreenath, "Dynamics and control of a quadrotor with a payload suspended through an elastic cable," 2017 American Control Conference (ACC), 2017, 3906-3913.
\bibitem{Lee-TCST}
T. Lee. Geometric control of quadrotor uavs transporting a cablesuspended rigid body. IEEE Transactions on Control Systems Technology, vol. 26, no. 1, pp. 255-264, 2018.

\bibitem{lee_geometric_2010} T. Lee, Taeyoung, M. Leok and N. H. {McClamroch}, Geometric tracking control of a quadrotor {UAV} on {SE}(3). 49th {IEEE} Conference on Decision and Control ({CDC}), 5420--5452, 0191-2216, 2010.

\bibitem{LeeSrePICDC13}
T.~Lee, K.~Sreenath, and V.~Kumar, Geometric control of cooperating multiple
  quadrotor {UAV}s with a suspended load.  \textit{Proceedings of the IEEE
  Conference on Decision and Control}, 2013, pp. 5510--5515.

\bibitem{loiano} G. Li, R. Ge, G. Loianno. Cooperative transportation of cable suspended payloads with mavs using monocular vision and inertial sensing. IEEE Robotics and Automation Letters, 6(3), 5316-5323, 2021.

\bibitem{MR} J. E. Marsden, T. S. Ratiu, Introduction to Mechanics and Symmetry: A Basic Exposition of Classical Mechanical System. Springer, 1999. 


\bibitem{petitti2020inertial}
A.~Petitti, D.~Sanalitro, M.~Tognon, A.~Milella, J.~Cort\'es, and A.~Franchi,
``Inertial Estimation and Energy-Efficient Control of a Cable-suspended Load with a Team of UAVs,''
in \emph{Proc. Int. Conf. on Unmanned Aircraft Systems (ICUAS)},
2020.

\bibitem{SreLeePICDC13}
K.~Sreenath, T.~Lee, and V.~Kumar, ``Geometric control and differential
  flatness of a quadrotor {UAV} with a cable-suspended load,'' in
  \textit{IEEE Conference on Decision and Control}, 2013,
  2269--2274.

\bibitem{sun2025agilecoop}
S.~Sun, X.~Wang, D.~Sanalitro, A.~Franchi, M.~Tognon, and J.~Alonso-Mora,
``Agile and Cooperative Aerial Manipulation of a Cable-Suspended Load,''
\emph{Science Robotics}, 2025.




\bibitem{tognon2018internalforce}
M.~Tognon, C.~Gabellieri, L.~Pallottino, and A.~Franchi,
``Aerial Co-Manipulation With Cables: The Role of Internal Force for Equilibria, Stability, and Passivity,''
\emph{IEEE Robotics and Automation Letters},
vol.~3, no.~3, pp.~2577--2583, 2018.



\bibitem{wu} G. Wu and K. Sreenath, Geometric control of quadrotors transporting a rigid-body load, in IEEE Conference on Decision and Control, Los Angeles, CA, Dec. 2014, pp. 6141-6148.
\end{thebibliography}
\end{document}